%% This is file `elsarticle-template-1-num.tex',
%%
%% Copyright 2009 Elsevier Ltd
%%
%% This file is part of the 'Elsarticle Bundle'.
%% ---------------------------------------------
%%
%% It may be distributed under the conditions of the LaTeX Project Public
%% License, either version 1.2 of this license or (at your option) any
%% later version.  The latest version of this license is in
%%    http://www.latex-project.org/lppl.txt
%% and version 1.2 or later is part of all distributions of LaTeX
%% version 1999/12/01 or later.
%%
%% The list of all files belonging to the 'Elsarticle Bundle' is
%% given in the file `manifest.txt'.
%%
%% Template article for Elsevier's document class `elsarticle'
%% with numbered style bibliographic references
%%
%% $Id: elsarticle-template-1-num.tex 149 2009-10-08 05:01:15Z rishi $
%% $URL: http://lenova.river-valley.com/svn/elsbst/trunk/elsarticle-template-1-num.tex $
%%
\documentclass[preprint,12pt]{elsarticle}

%% Use the option review to obtain double line spacing
%% \documentclass[preprint,review,12pt]{elsarticle}

%% Use the options 1p,twocolumn; 3p; 3p,twocolumn; 5p; or 5p,twocolumn
%% for a journal layout:
%% \documentclass[final,1p,times]{elsarticle}
%% \documentclass[final,1p,times,twocolumn]{elsarticle}
%% \documentclass[final,3p,times]{elsarticle}
%% \documentclass[final,3p,times,twocolumn]{elsarticle}
%% \documentclass[final,5p,times]{elsarticle}
%% \documentclass[final,5p,times,twocolumn]{elsarticle}

%% if you use PostScript figures in your article
%% use the graphics package for simple commands
\usepackage{graphics}
%% or use the graphicx package for more complicated commands
%%\usepackage{graphicx}
%% or use the epsfig package if you prefer to use the old commands
%% \usepackage{epsfig}

%% The amssymb package provides various useful mathematical symbols
%%\usepackage{amssymb}
%% The amsthm package provides extended theorem environments
%%\usepackage{amsthm}
\usepackage{times}
\usepackage{graphicx}
\usepackage{dsfont}
\usepackage{color}
\usepackage{subfigure}
\usepackage{enumerate}
\usepackage{tikz}
\usepackage{lscape}
\usepackage{rotating}

\usepackage{listings}
\usepackage{color}
 
\definecolor{dkgreen}{rgb}{0,0.6,0}
\definecolor{gray}{rgb}{0.5,0.5,0.5}
\definecolor{mauve}{rgb}{0.58,0,0.82}
 
\lstset{ %
  language=Octave,                % the language of the code
  basicstyle=\footnotesize,           % the size of the fonts that are used for the code
  numbers=left,                   % where to put the line-numbers
  numberstyle=\footnotesize,          % the size of the fonts that are used for the line-numbers
  stepnumber=2,                   % the step between two line-numbers. If it's 1, each line 
                                  % will be numbered
  numbersep=5pt,                  % how far the line-numbers are from the code
  backgroundcolor=\color{white},      % choose the background color. You must add \usepackage{color}
  showspaces=false,               % show spaces adding particular underscores
  showstringspaces=false,         % underline spaces within strings
  showtabs=false,                 % show tabs within strings adding particular underscores
  frame=single,                   % adds a frame around the code
  tabsize=2,                      % sets default tabsize to 2 spaces
  captionpos=b,                   % sets the caption-position to bottom
  breaklines=true,                % sets automatic line breaking
  breakatwhitespace=false,        % sets if automatic breaks should only happen at whitespace
  %title=\lstname,                   % show the filename of files included with \lstinputlisting;
                                  % also try caption instead of title
  numberstyle=\tiny\color{gray},        % line number style
  %keywordstyle=\color{black},          % keyword style
  %commentstyle=\color{dkgreen},       % comment style
  %stringstyle=\color{mauve},         % string literal style
  %escapeinside={\%*}{*)},            % if you want to add a comment within your code
  %morekeywords={*,...}               % if you want to add more keywords to the set
}

\usepackage{amsmath,amsfonts,amssymb,amsthm,dsfont}
\usepackage[cp1250]{inputenc}

\usepackage{algorithmic}
\usepackage{algorithm}
\newcommand{\INDSTATE}[1][1]{\STATE\hspace{#1\algorithmicindent}}

\newtheorem{observation}{Observation}[section]
\newtheorem{theorem}{Theorem}[section]
\newtheorem{proposition}{Proposition}[section]
\newtheorem{lemma}{Lemma}[section]

\newtheorem{remark}{Remark}[section]
\newtheorem{example}{Example}[section]

\newtheorem*{SC}{Algorithm}
\newtheorem*{IRLST}{Theorem IRLS}
\newtheorem*{IRLSC}{Corollary IRLS}

\def\Z{\mathbb{Z}}
\def\R{\mathbb{R}}

\def\m{\mathrm{m}}

\def\1{\mathds{1}}

\def\for{\mbox{  for }}

%% The lineno packages adds line numbers. Start line numbering with
%% \begin{linenumbers}, end it with \end{linenumbers}. Or switch it on
%% for the whole article with \linenumbers after \end{frontmatter}.
%% \usepackage{lineno}

%% natbib.sty is loaded by default. However, natbib options can be
%% provided with \biboptions{...} command. Following options are
%% valid:

%%   round  -  round parentheses are used (default)
%%   square -  square brackets are used   [option]
%%   curly  -  curly braces are used      {option}
%%   angle  -  angle brackets are used    <option>
%%   semicolon  -  multiple citations separated by semi-colon
%%   colon  - same as semicolon, an earlier confusion
%%   comma  -  separated by comma
%%   numbers-  selects numerical citations
%%   super  -  numerical citations as superscripts
%%   sort   -  sorts multiple citations according to order in ref. list
%%   sort&compress   -  like sort, but also compresses numerical citations
%%   compress - compresses without sorting
%%
%% \biboptions{comma,round}

% \biboptions{}

\journal{Information Sciences}

\begin{document}

\begin{frontmatter}

%% Title, authors and addresses

%% use the tnoteref command within \title for footnotes;
%% use the tnotetext command for the associated footnote;
%% use the fnref command within \author or \address for footnotes;
%% use the fntext command for the associated footnote;
%% use the corref command within \author for corresponding author footnotes;
%% use the cortext command for the associated footnote;
%% use the ead command for the email address,
%% and the form \ead[url] for the home page:
%%
%% \title{Title\tnoteref{label1}}
%% \tnotetext[label1]{}
%% \author{Name\corref{cor1}\fnref{label2}}
%% \ead{email address}
%% \ead[url]{home page}
%% \fntext[label2]{}
%% \cortext[cor1]{}
%% \address{Address\fnref{label3}}
%% \fntext[label3]{}

\title{The memory centre}

%% use optional labels to link authors explicitly to addresses:
%% \author[label1,label2]{<author name>}
%% \address[label1]{<address>}
%% \address[label2]{<address>}

%----------Author 1
\author{P. Spurek}
\address{
Faculty of Mathematics and Computer Science, 
Jagiellonian University, 
\L ojasiewicza 6, 
30-348 Krak\'ow, 
Poland}
\ead{przemyslaw.spurek@ii.uj.edu.pl}

%----------Author 2
\author{J. Tabor}
\address{Faculty of Mathematics and Computer Science, 
Jagiellonian University, 
\L ojasiewicza 6, 
30-348 Krak\'ow, 
Poland}
\ead{jacek.tabor@ii.uj.edu.pl}

\begin{abstract}
Let $x \in \R$ be given. As we know the, amount of bits needed to binary code $x$ with given accuracy ($h \in \R$) is  approximately 
$
\m_{h}(x) \approx \log_{2}(\max \{ 1 , | \frac{x}{h} |  \}). 
$
We consider the problem where we should translate the origin $a$ so that the mean amount of bits needed to code randomly chosen element from a realization of a random variable $X$ is minimal.
In other words, we want to find $a \in \R$ such that
$$
\R \ni a \to \mathrm{E} (\m_{h} (X-a))
$$
attains minimum.

We show that under reasonable assumptions, the choice of $a$ does not depend on $h$ asymptotically.
Consequently, we reduce the problem to finding minimum of the function
$$
\R \ni a \to \int_{\R}  \ln(|x-a|)f(x) dx,
$$
where $f$ is the density distribution of the random variable $X$.
Moreover, we provide constructive approach for determining $a$.

\end{abstract}

\begin{keyword}
memory compressing \sep IRLS \sep differential entropy \sep coding \sep kernel estimation 

%% MSC codes here, in the form: \MSC code \sep code
%% or \MSC[2008] code \sep code (2000 is the default)

\end{keyword}

\end{frontmatter}

%%
%% Start line numbering here if you want
%%
% \linenumbers

%% main text
%\section{}
%\label{}

\section{Introduction}

Data compression is usually
achieved by assigning short descriptions (codes) to the most frequent outcomes
of the data source and necessarily longer descriptions to the less frequent
outcomes \cite{T,HHJ,SM} . 

For the convenience of the reader, we shortly present theoretical background of this approach.
Let $p=(p_0, \ldots, p_{n-1})$ be a probability distribution for a discrete random variable $X$. 
Assume that $l_i$ is the length of the code of $x_i$ for $i=0, \ldots, n-1$. 
Then the expected number of bits is given by
$
\sum_{i} p_i l_i.
$
The set of  possible codeword with uniquely decodable codes is limited by the Kraft inequality
$
\sum_{i}2^{-l_i} \leq 1.
$
It is enough to verify that lengths which minimize $\sum_{i} p_i l_i$ are given by $l_i=\log_2p_i$.
We obtain that minimal amount of information per one element in lossless coding is
Shannon entropy \cite{T} defined by
$$
H(X)=\sum - p_i \log_{2} p_i.
$$
By this approach various types of lossless data compression were constructed.  An optimal (shortest expected length) prefix code for a given distribution can be constructed by
a simple algorithm discovered by Huffman \cite{H}. 

If we want to consider continuous random variables %the real length of code 
and code with given maximal error $h$ we arrive at the notion of differential entropy \cite{T}. Let $f \colon \R \to \R$ be a continuous density distribution of the random variable $X$, and let $l \colon \R \to \R$ be the function of the code length. We divide the domain of $f$ into disjoint intervals of length $h$. Let $X_{h} := \frac{1}{h} \lfloor hx \rfloor$ be the discretization of $X$. The Shannon entropy of $X_h$ can be rewritten as follows 
\begin{equation}
\label{diff}
\begin{split}
H(X_{h}) & \approx \sum - f(x_i) h \log_{2}\left( f(x_i) h \right)\\
 & = \sum - f(x_i) \log_{2}\left( f(x_i) \right) h - \log_{2}\left( h \right) \sum f(x_i) h   \\
 & \approx \int  - f(x)\log_{2}\left( f(x) \right) dx -  \log_{2}\left( h \right) \int f(x)  dx \\
 & = \int - f(x_i)\log_{2}\left( f(x_i) \right) dx - \log_{2}\left( h \right).
\end{split}
\end{equation}

By taking the limit of $ H(X_h) + \log_2(h) $ as $h \to 0$, we obtain the definition of the differential entropy\footnote{Very often $\ln$ is used instead of $\log_2$, also in this article we use this convention.}
\begin{equation}
H(f) := - \int f(x) \log_{2} (f(x)) dx. 
\end{equation}

In this paper, we follow a different approach. Instead of looking for the best type of coding for a given dataset, 
we use standard binary coding\footnote{In the classical binary code we use one bit for the sign and then the standard binary representation. This code is not prefix so we have to mark ends of words. Similar coding is used in the decimal numeral system.} and we search for the optimal center $a$ of the coordinate system so that the mean amount of bits needed to code the dataset is minimal. 
The main advantage of this idea is that we do not have to fit the type of compression to a dataset.
Moreover, codes are given in very simple way.
%Thanks to it, we do not have to fit type of compression to the dataset.
%Moreover, we obtain code without complicated analysis of behaviour of the sample.% which usually is complicated and has high computational complexity.  
This approach allows to immediately encrypt and decrypt large datasets (we use only one type of code).
Clearly, classical binary code is far from being optimal but it is simple and commonly used in practise.
%, but from theoretical point of view it is allowed (the length of code fulfils Kraft inequality). Moreover it is very simple and commonly used in practise.

The number of bits needed to code $x \in \Z$, by the classical binary code, is given by 
$$
\m(x) \approx  \log_{2}(\max \{ 1 , |x| \}) .
$$
%By similar derivation, as was presented in the definition of differential entropy (\ref{diff}), we obtain the function of expected number of bits. 
Consequently, the memory needed to code $x \in \R$ 
with given accuracy $h$ is approximately 
$$
\m_{h}(x) = \m \left( \frac{x}{h} \right)  \approx  \log_{2} \left( \max \left\{ 1 , \left| \frac{x}{h} \right| \right\} \right) .
$$
%As it was said in this article, we are not looking for the best type of coding suitable for the dataset.
Our aim is to find the place where to put the origin $a$ of the coordinate system so that the mean amount of bits needed to code randomly chosen an element from a sample from probability distribution of $X$ is minimal.
In other words, we want to find $a \in \R$ such that
$$
%\M_{h}(a,f) := 
\mathrm{E} (\m_{h} (X-a)) = \int_{\R} \m_{h}(x-a)f(x)dx
$$
attains minimum, where $f$ is the density distribution of the random variable $X$.

Our paper is arranged as follows. In the next section we show that under reasonable assumptions,  the choice of $a$ does not depend on $h$ asymptotically. This reasoning
is similar to the derivation of the differential entropy $(\ref{diff})$.
%We present the method of looking the point $a$ which based on the Iterated Reweighted Last Square algorithm (IRLS).

In the third section, we consider the typical situation when the density distribution of a random variable $X$ is not known.
We use a standard kernel method to estimate the density $f$. 
Working with the estimation is possible, but from the numerical point of view, complicated and not effective. 
So in the next section we show reasonable approximation which has better properties. 

In the fourth section, we present our main algorithm and in Appendix B we put full implementation. 
%The method (based on IRLS\footnote{For more information about Iteratively Reweighted Least Squares Algorithm (IRLS), see Appendix.} and Cauchy estimator \cite[section 9]{ZZ}) can be described as follows.
%Let $S = (x_1 , \ldots, x_N)$ be a given realization of the random variable X. Let
%$$
%h = 2.35  N ^{-\frac{1}{5}}\sqrt{ \sum_{i=1}^{N} \frac{( x_{i} - m(S)  )^2}{N-1}}, \quad
%\mbox{where } \quad m(S) := \sum \limits_{i=1}^{N} \frac{x_{i}}{N}.
%$$
%\begin{SC}

%\begin{center}
%	\begin{minipage}[t]{0.9\textwidth}
%		\begin{algorithmic}
%			\STATE {\bf initialization}
%			\STATE \textit{stop condition}
%			\INDSTATE  $\varepsilon > 0$
%			\STATE \textit{initial condition}
%			\INDSTATE  $j=0$
%			\INDSTATE  $a_j=m(S) $
%			\REPEAT 
%			\STATE \textit{calculate} 
%			\INDSTATE  $w_{i}:=\left( e^{-\frac{8}{3}}+ \frac{1}{h^2} (x_i-a_j)^2 \right)^{-1} \for i = 1, \ldots, N$ 
%			\INDSTATE  $j=j+1$
%			\INDSTATE  $ a_j =  \frac{1}{\sum _{i=0}^{N} w_{i} } \sum \limits_{i=0}^{N}  w_{i}  x_{i} $
%			\UNTIL{ $| a_j - a_{j-1} | < \varepsilon $}
%		\end{algorithmic}
%	\end{minipage}
%\end{center}

%\end{SC}
%In general we can use different initial condition, for example we can apply the algorithm few times with randomly chosen, from the dataset, starting  point.

%Making use of the above method, we usually obtain close to optimal (with respect to memory needed to binary code a given random variable $X$)bcenter of coordinate system.
In the last section, we present how our method works on typical datasets. 

%%%%%%%%%%%%%%%%%%%%%%%%%%%%%%%%%%%%%%%%%%%%%%%%%%%%%%%%%%%%%%%%%%%
\section{The kernel density estimation}
%%%%%%%%%%%%%%%%%%%%%%%%%%%%%%%%%%%%%%%%%%%%%%%%%%%%%%%%%%%%%%%%%%%

As it was mentioned in the previous section, our aim is to minimize, for a fixed $h$, the function $a \to \mathrm{E} (\m_{h} (X-a))$.
In this chapter, we show that the choice of $a$ (asymptotically) does not depend on $h$.
%We show that the choice of $a$ does not depend on $h$. 
%As shows following theorem it is enough to consider
In Theorem \ref{theo:2.1}, we use a similar reasoning as in the derivation of differential entropy $(\ref{diff})$ and
we show that it is enough to consider the function
$$
M_f(a):= \int_{\R}  \ln(|x-a|)f(x) dx. 
$$

\begin{theorem}\label{theo:2.1}
We assume that the random variable $X$ has locally bounded density distribution $f \colon \R \to \R$. If $ M_f(a) < \infty $ for $a \in \R$, then
$$
\lim_{h \to 0} \left| \mathrm{E} (\m_{h} (X-a)) - \frac{1}{\ln(2)} M_f(a) + \log_{2}(h) \right|= 0 \quad  \for a \in \R .
$$
\end{theorem}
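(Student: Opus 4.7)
The plan is to rewrite the quantity inside the absolute value as a single integral and bound it by an $O(h)$ estimate. First, using $\int f(x)\,dx = 1$ and the hypothesis that $\int \log_2|x-a|\,f(x)\,dx = \tfrac{1}{\ln 2}M_f(a)$ is finite, I would combine the three terms into
\begin{equation*}
\mathrm{E}(\m_h(X-a)) - \tfrac{1}{\ln 2}M_f(a) + \log_2 h \;=\; \int_{\R}\bigl[\log_2\max\{1,|x-a|/h\} - \log_2(|x-a|/h)\bigr]f(x)\,dx.
\end{equation*}
The key observation is that the bracketed integrand vanishes on $\{|x-a|\geq h\}$, because there $\max\{1,|x-a|/h\}=|x-a|/h$. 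So the difference reduces to $\int_{|x-a|<h}-\log_2(|x-a|/h)\,f(x)\,dx$, an integral over a shrinking neighbourhood of $a$ of length $2h$, where the integrand is nonnegative.

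Second, I would use the local boundedness of $f$ near $a$ to control this remaining integral. Choosing $h_0,M>0$ so that $f\leq M$ on $[a-h_0,a+h_0]$, for $h<h_0$ the substitution $y=(x-a)/h$ gives
\begin{equation*}
\int_{|x-a|<h}-\log_2(|x-a|/h)\,f(x)\,dx \;\leq\; Mh\int_{-1}^{1}-\log_2|y|\,dy \;=\; \frac{2M}{\ln 2}\,h,
\end{equation*}
which tends to $0$ as $h\to 0$, yielding the claimed limit.

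The only delicate point is making sure that the splitting in the first step is legitimate, that is, that each of the three integrals being combined is individually finite. Local boundedness of $f$ handles the integrability of $\log_2|x-a|\cdot f(x)$ near $x=a$ (where the logarithm has an integrable singularity), while the assumption $M_f(a)<\infty$ controls its behaviour at infinity; once this is in place, the remaining argument is an elementary change of variables, parallel in spirit to the differential-entropy derivation in equation (\ref{diff}).
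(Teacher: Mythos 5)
Your proposal is correct and follows essentially the same route as the paper: split off the $\log_2(h)$ and $\tfrac{1}{\ln 2}M_f(a)$ terms, observe that the discrepancy is supported on $(a-h,a+h)$, and kill it using local boundedness of $f$. In fact your write-up is slightly more careful than the paper's, since you make the integrability needed for the splitting explicit and give the quantitative bound $\tfrac{2M}{\ln 2}h$ for the remainder rather than just asserting it vanishes.
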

\begin{proof}
Let $a \in \R$ be fixed. Then
\begin{eqnarray*}
  \lefteqn{\mathrm{E} (\m_{h} (X-a)) = } \\
  & & \int_{\R}  \m_{h}(x-a)f(x) dx = \int_{\R}  \log_{2}\left( \max \left\{ 1 , \frac{|x-a|}{h} \right\} \right) f(x) dx = \\
  & & =\int_{\R \setminus (a-h , a + h) }  \log_{2}\left( \frac{|x-a|}{h} \right) f(x) dx =\\%+ \int_{ (a-h , a + h) } f(x) dx= \\
  & & = \int_{\R \setminus (a-h , a + h) }  \log_{2} \left(|x-a|\right)f(x) dx + \int_{\R \setminus (a-h , a + h) } \log_{2} \left(h \right) f(x)dx = \\
  & & = \int_{\R}  \log_{2} \left(|x-a|\right)f(x) dx -
\int_{(a-h , a + h) }  \log_{2} \left(|x-a|\right)f(x) dx +\\
  & &  \int_{\R  } \log_{2} \left(h \right) f(x)dx -  
\int_{ (a-h , a + h) } \log_{2} \left( h \right) f(x)dx. %+\int_{ (a-h , a + h) } f(x) dx. 
\end{eqnarray*}
Since the function $f$ is a locally bounded density distribution so
$$
%\lim_{ h \to 0} \left( \int_{ (a-h , a + h) } f(x) dx 
 - \int_{(a-h , a + h) }  \log_{2} \left(|x-a|\right)f(x) dx -  
\int_{ (a-h , a + h) } \log_{2} \left( h \right) f(x)dx  = 0.
$$
Consequently
$$
\lim_{ h \to 0} \left| \mathrm{E} (\m_{h} (X-a)) - \frac{1}{\ln(2)} M_f(a) + \log_{2}(h)  \right|= 0 \quad \for a \in \R .
$$
\end{proof}

As we see, when we increase the accuracy  ($h \to 0$) of coding the shape of the function $\mathrm{E} (\m_{h} (X-a))$
stabilizes (modulo subtraction of $\log_{2}(h)$). %So, in this paper, we will consider $M_f(a)$.  

\begin{example}
Let $f$ be a uniform density distribution on interval $\left[ -\frac{1}{2},\frac{1}{2} \right]$.
Then 
$$
M_f(a)= \int_{\R}  \ln(|x-a|)f(x) dx 
=\int_{-\frac{1}{2}}^{\frac{1}{2}}  \ln(|x-a|) dx.
$$
Moreover, since 
$$
\int  \ln(x-a)  dx=
\ln(x-a)x-\ln(x-a)a-x-a,
$$
then we have
$$
M_f(a) = \left\{ \begin{array}{ll}
\ln \left( | \frac{1}{2} - a | \right) \left(\frac{1}{2} -a \right) + \ln\left( | \frac{1}{2}+a | \right) \left( \frac{1}{2}+a \right) -1& \textrm{for $|a| \neq \frac{1}{2}$,}\\
-1& \textrm{for $ |a| =\frac{1}{2}$.}
\end{array} \right.
$$
Function $M_f(a)$ is presented in Fig. \ref{fig:EX_2}.
\begin{figure}[htp]
  \begin{center}
\includegraphics[height=6cm]{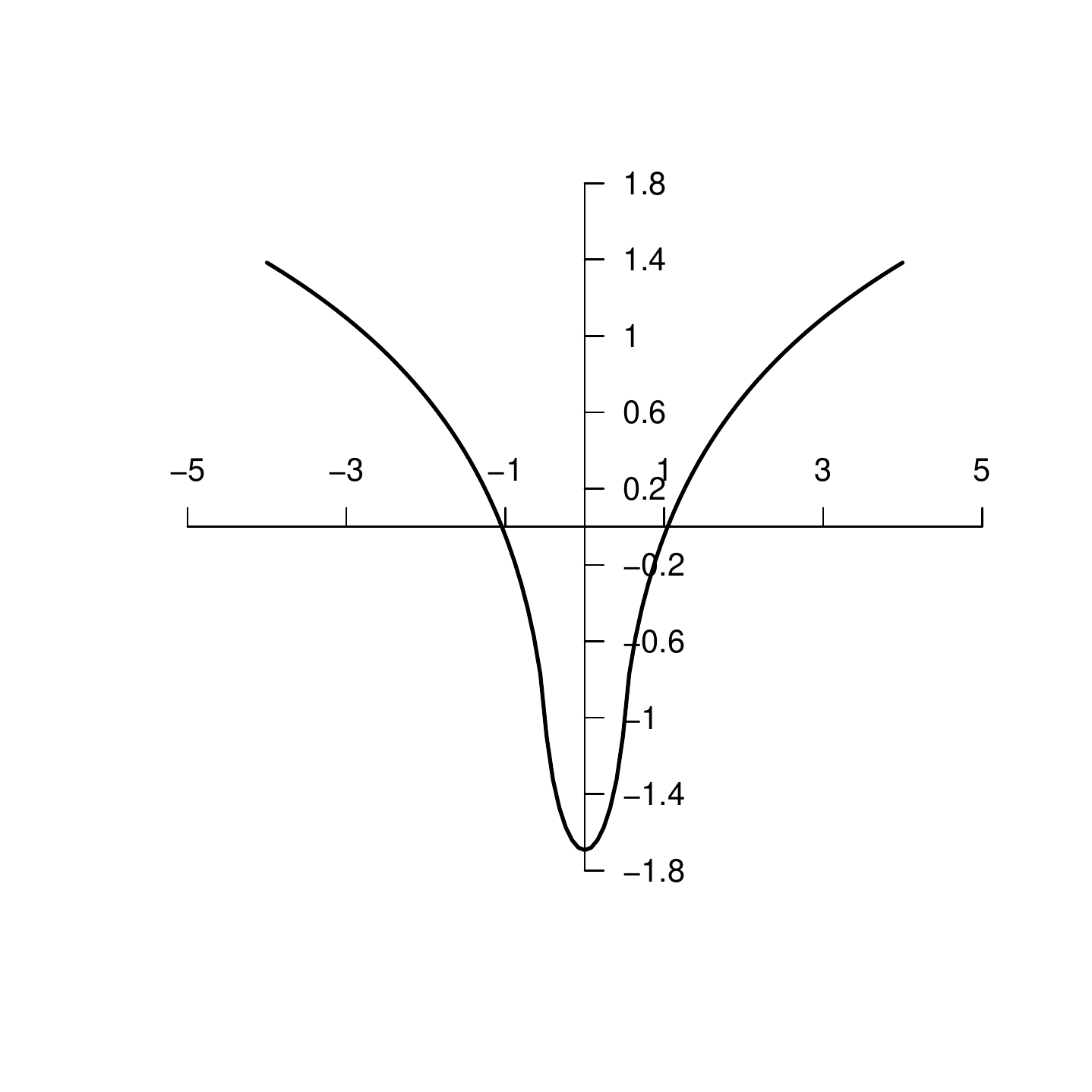}
  \end{center}
  \caption{Function $M_f(a)$ constructed for uniform density on $\left[ -\frac{1}{2} , \frac{1}{2} \right]$.}
  \label{fig:EX_2}
\end{figure}
It is easy to see that 
$
\min\left( M_f(a) \right) =0.
$
\end{example}

In a typical situation, we do not know the density distribution $f$ of random variable $X$. We only have a sample $ S=(x_1,\ldots,x_N)$ from $X$. To approximate $f$ we use kernel method \cite{BW}. 
For the convenience of the reader and to establish the notation we shortly present this method. The kernel  $K \colon \R \to \R$ is simply a function satisfying the following condition
$$
\int_{\R} K(x) dx= 1.
$$
Usually, $K$ is a symmetric probability density function. The kernel estimator of $ f$ with kernel $K$ is defined by
$$
\bar f(x) := \frac{1}{Nh} \sum_{i=1}^{N} K \left( \frac{x - x_{i} }{h} \right),
$$
where $h$ is the window width\footnote{Also called "the smoothing parameter" or "bandwidth" by some authors.}. 
Asymptotically optimal (for $N \to \infty$) choice of kernel K in class of symmetric and square-integrable functions is the Epanechnikov kernel \footnote{Often a rescaled (normalized) Epanechnikov kernel is used.}
$$
K(x) = \left\{ \begin{array}{lll}
\frac{3}{4}(1 - x^2) & \textrm{for $|x| < 1$,}\\
0 & \textrm{for $ |x| \ge 1 $.}
\end{array} \right.
$$
Asymptotically optimal choice of window width (under the assumption that density is Gaussian) is given by
$$
h \approx
2.35 s N ^{-\frac{1}{5}},
\quad \mbox{where } \quad
s = \sqrt{ \sum_{i=1}^{N} \frac{( x_{i} - m(S)  )^2}{N-1}}.
$$
Thus our aim is to minimize the function
$$
a \to M_S(a) := \frac{1}{Nh} \sum_{i=1}^{N}  \int_{x_i-h}^{x_i+h}   \ln(|x-a|) \frac{3}{4}\left( 1 -   \left( \frac{x - x_{i} }{h} \right)^2 \right) dx. 
$$
To compute $ M_{S}(a)$, we analyse the function $L \colon \R \to \R$ (see Fig. \ref{L}) given by:
$$
L \colon \R \ni a \to \frac{3}{4} \int_{-1}^{1}   \ln(|x-a|) \left( 1 -  x  ^2 \right) dx.
$$
\begin{figure}[htp]
  \begin{center}
\includegraphics[height=6cm]{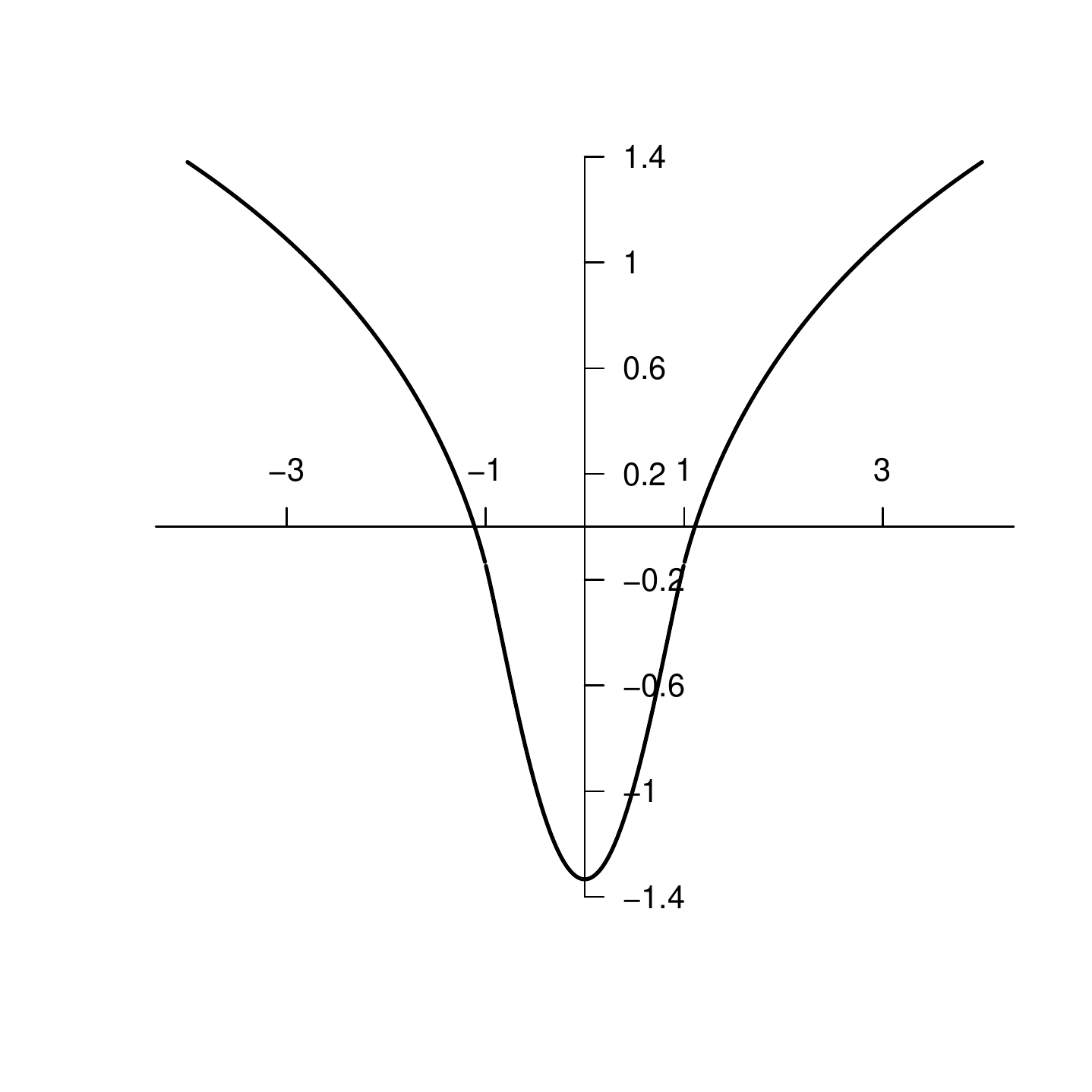}
  \end{center}
  \caption{Function $L$.}
  \label{L}
\end{figure}

\begin{lemma} \label{rem:2.1}
We have
$$
M_S(a) =  \ln(h) + \frac{1}{N} \sum_{i=1}^{N} L\left( \frac{x_i-a}{h} \right).
$$
\end{lemma}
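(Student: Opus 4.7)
The statement is an essentially computational identity, so the plan is to unfold the definition of $M_S(a)$ and rescale each of the $N$ integrals to match the template that defines $L$. Concretely, in the $i$-th summand I would perform the change of variables $u = (x-x_i)/h$, so that $dx = h\,du$, the limits $x_i\mp h$ become $\mp 1$, and $|x-a| = h\bigl|u - (a-x_i)/h\bigr|$. The prefactor $1/(Nh)$ cancels against the Jacobian $h$, leaving
$$
M_S(a) = \frac{1}{N}\sum_{i=1}^{N} \frac{3}{4}\int_{-1}^{1} \ln\!\bigl(h\,|u - b_i|\bigr)\,(1-u^2)\,du,
\qquad b_i := \frac{a - x_i}{h}.
$$

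Next I would split the logarithm, $\ln(h|u-b_i|) = \ln(h) + \ln(|u-b_i|)$, which separates the integral into a constant piece and a piece that matches the definition of $L$. The constant piece is $\ln(h)$ times
$$
\frac{3}{4}\int_{-1}^{1}(1-u^2)\,du = \frac{3}{4}\cdot\frac{4}{3} = 1,
$$
using that the Epanechnikov kernel is a probability density on $[-1,1]$. Summing over $i$ and dividing by $N$ therefore gives the $\ln(h)$ term of the claimed identity, while the remaining piece yields $\frac{1}{N}\sum_{i} L(b_i)$.

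The only cosmetic mismatch is that the calculation naturally produces $L\bigl((a-x_i)/h\bigr)$, whereas the lemma states the argument as $(x_i-a)/h$. To reconcile this, I would note the symmetry $L(-t) = L(t)$: in the integral defining $L(-t)$, substituting $y = -x$ maps $(1-x^2)$ to itself and $\ln(|x+t|)$ to $\ln(|y-t|)$, giving back $L(t)$. Hence $L(b_i) = L(-b_i) = L\bigl((x_i-a)/h\bigr)$, and the two forms agree.

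There is no real obstacle here; the argument is a linear change of variables together with the normalisation of the Epanechnikov kernel and a short parity check for $L$. The one small thing worth stating explicitly (so that the split of the logarithm is legitimate) is that the integral $\int_{-1}^{1}\ln(|u-b_i|)(1-u^2)\,du$ is finite for every $b_i\in\R$, which follows from the integrability of $\ln|\cdot|$ near $0$.
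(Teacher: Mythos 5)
Your proposal is correct and follows essentially the same route as the paper: the substitution $y=(x-x_i)/h$, the splitting of $\ln(h|\cdot|)$ using that the Epanechnikov kernel integrates to $1$, and the identification of the remaining integral with $L$. The only difference is that you explicitly justify the sign of the argument via the evenness of $L$ (and note the integrability of the logarithmic singularity), points the paper passes over silently.
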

\begin{proof}
By simple calculations we obtain
\begin{eqnarray*}
  \lefteqn{M_S(a) = } \\
  & & = \frac{1}{Nh} \sum_{i=1}^{N}  \int_{x_i-h}^{x_i+h}   \ln(|x-a|) \frac{3}{4}\left( 1 -   \left( \frac{x - x_{i} }{h} \right)^2 \right) dx= \\
  & & =\left| \begin{array}{c}
y = \frac{x-x_i}{h}\\
dy = \frac{dx}{h}\\
x= hy + x_i 
\end{array} \right|=
 \frac{1}{N} \sum_{i=1}^{N} \frac{3}{4} \int_{-1}^{1}   \ln\left(h \left| y + \frac{x_i - a}{h}\right|\right) \left( 1 -  y ^2 \right) dy = \\
  & & = \frac{1}{N} \sum_{i=1}^{N} \frac{3}{4} \int_{-1}^{1}   \ln\left(\left| y + \frac{x_i - a}{h}\right|\right) \left( 1 -  y ^2 \right) dy 
+ \ln(h)=
 \ln(h) + \frac{1}{N} \sum_{i=1}^{N} L\left( \frac{x_i-a}{h} \right).
\end{eqnarray*}
\end{proof}

\section{Approximation of the function $L$}

As it was shown in the previous section, the crucial role in our investigation is played by the function $L$
and therefore, in this chapter we study its basic properties.
Let us begin with the exact formula for $L$.

\begin{proposition}\label{fun:org}
The function $ L $ is given by the following formula
$$
L(a)=
 \left\{ \begin{array}{ll}
 \frac{1}{2}\ln(|1-a^2|)+ \left( \frac{3}{4}a-\frac{1}{4}a^3 \right)\ln\left( \left| \frac{1+a}{1-a} \right| \right) +\frac{1}{2}a^2  -\frac{4}{3}& \textrm{for $|a| \neq 1$,}\\
\ln(2)-\frac{5}{6} & \textrm{for $|a| =1$.}
\end{array} \right.
$$
Moreover, $L$ is even,
$L(0)=-\frac{4}{3}$ and
$ \lim \limits _{|a| \to \infty} ( L(a) - \ln(|a|) ) = 0$.

\end{proposition}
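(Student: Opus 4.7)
The idea is to treat each assertion in turn, with the closed-form formula being the only nontrivial computation.

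My plan for the explicit formula is to avoid expanding the original integral of $\ln|x-a|(1-x^2)$ in one shot and instead use differentiation under the integral. For $|a|\neq 1$ the integral is smooth in $a$, and
\[
L'(a) = -\frac{3}{4}\int_{-1}^{1}\frac{1-x^2}{x-a}\,dx.
\]
Polynomial division gives $\frac{1-x^2}{x-a} = -(x+a) + \frac{1-a^2}{x-a}$, so a one-line computation yields
\[
L'(a) = \frac{3a}{2} + \frac{3}{4}(1-a^2)\ln\!\left|\frac{1+a}{1-a}\right|.
\]
Then I would integrate back: the first term contributes $\tfrac34 a^2$, and the second is handled by parts with $u=\ln|(1+a)/(1-a)|$, $dv=(1-a^2)\,da$, which produces the antiderivative $\tfrac12\ln|1-a^2| + (\tfrac34 a - \tfrac14 a^3)\ln|\tfrac{1+a}{1-a}|$ up to an additive constant. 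That constant is fixed by evaluating $L(0)$: a direct integration by parts on $\frac{3}{4}\int_{-1}^{1}\ln|x|(1-x^2)\,dx$ (with $u=\ln|x|$, $v=x-\tfrac{x^3}{3}$) gives $L(0)=-\tfrac43$, forcing the constant in the closed form.

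For the case $|a|=1$, by evenness I only need $L(1)$. The cleanest way is to evaluate $\frac{3}{4}\int_{-1}^{1}\ln(1-x)(1-x^2)\,dx$ directly via the substitution $u=1-x$, or equivalently take the limit $a\to 1^-$ in the formula: the singularity $\tfrac12\ln|1-a^2|\to-\infty$ is exactly cancelled by the blow-up of $(\tfrac34 a-\tfrac14 a^3)\ln|\tfrac{1+a}{1-a}|$, leaving the value $\ln 2 - \tfrac56$. The small bookkeeping exercise here is the main delicate point; it is best done by writing $1-a^2=(1-a)(1+a)$, factoring out $\ln(1-a)$ on both terms, and verifying that its coefficient vanishes at $a=1$.

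The remaining properties are immediate. Evenness follows from the substitution $y=-x$ in the integral defining $L$, using that $(1-y^2)$ is even. For the asymptotics, write $\ln|x-a| = \ln|a| + \ln|1 - x/a|$ for $|a|>1$; since $\tfrac{3}{4}\int_{-1}^{1}(1-x^2)\,dx=1$,
\[
L(a) - \ln|a| = \frac{3}{4}\int_{-1}^{1}\ln\!\left|1-\tfrac{x}{a}\right|(1-x^2)\,dx,
\]
and the integrand tends to $0$ uniformly on $[-1,1]$ as $|a|\to\infty$, giving the claimed limit.

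The only step requiring care is the antiderivative calculation and the cancellation at $|a|=1$; everything else is a short symmetry or dominated-convergence argument.
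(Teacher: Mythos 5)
Your proof is correct, but it reaches the closed form by a genuinely different route than the paper. The paper antidifferentiates in $x$: it computes $\int\ln(x-a)(1-x^2)\,dx$ by parts for $a>1$, evaluates at the endpoints, and asserts the other cases are analogous; you instead differentiate in $a$, obtaining $L'(a)=\tfrac{3a}{2}+\tfrac34(1-a^2)\ln\bigl|\tfrac{1+a}{1-a}\bigr|$, and integrate back, fixing the constant via $L(0)=-\tfrac43$. (I checked: your $L'$ is correct, and integrating it by parts reproduces exactly the stated formula with constant $-\tfrac43$.) Your route trades one messy antiderivative for a cleaner one, at the cost of two small obligations the paper's direct computation avoids. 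First, for $|a|<1$ the differentiated integrand $\tfrac{1-x^2}{x-a}$ is only a principal value, so differentiation under the integral sign needs a word of justification (e.g.\ substitute $t=x-a$ and use Leibniz's rule, whereupon the boundary terms vanish because $1-x^2$ does at $x=\pm1$). Second, the antiderivative argument determines $L$ only up to a constant \emph{on each component} of $\R\setminus\{\pm1\}$; evaluating at $a=0$ fixes it on $(-1,1)$ only, so for $|a|>1$ you must invoke continuity of $L$ at $a=\pm1$ (immediate from dominated convergence) together with the one-sided limits of the formula, which your cancellation computation at $|a|=1$ already supplies. For the asymptotics, you and the paper use essentially the same idea ($\ln|x-a|=\ln|a|+\ln|1-x/a|$), except the paper first symmetrizes $\ln|x-a|+\ln|x+a|=\ln|a^2-x^2|$ using the evenness of $L$; your unsymmetrized version with uniform convergence on $[-1,1]$ is equally valid and slightly more direct. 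The evenness and $L(0)$ computations agree with the paper.
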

\begin{proof}
We consider the case when $a > 1$ (by similar calculation, we can get this result for all $a \in \R$)

\begin{eqnarray*}
  \lefteqn{ \int   \ln(x-a) \left( 1 -  x  ^2 \right) dx= } \\
  & & =
\left| \begin{array}{cc}
 \ln(x-a)&  1 -  x  ^2 \\
 \frac{1}{x-a} &  x - \frac{1}{3}x^3 
\end{array} \right|
= \left( x - \frac{1}{3}x^3 \right) \ln(|x-a|)  - \int  \frac{x - \frac{1}{3}x^3}{x-a} dx=\\
  & & =\ln(x-a) \left( - \frac{1}{3}x^3 + x -a + \frac{1}{3}a^3 \right) -x+a+\frac{1}{9}x^3+\frac{1}{6}x^2a+\frac{1}{3}xa^2-\frac{11}{18}a^3.
\end{eqnarray*}
Consequently
$$
L(a)=
 \left\{ \begin{array}{ll}
 \frac{1}{2}\ln(|1-a^2|)+ \left( \frac{3}{4}a-\frac{1}{4}a^3 \right)\ln\left( \left| \frac{1+a}{1-a} \right| \right) +\frac{1}{2}a^2  -\frac{4}{3}& \textrm{for $|a| \neq 1$,}\\
\ln(2)-\frac{5}{6} & \textrm{for $|a| =1$.}
\end{array} \right.
$$
As a simple corollary, we obtain that $L$ is even and
$L(0)=-\frac{4}{3}$.
To show the last property, we use the equality
$$
\frac{3}{4} \int_{-1}^{1}   \ln(|x-a|) \left( 1 -  x  ^2 \right) dx = \frac{3}{4} \int_{-1}^{1}   \ln(|x+a|) \left( 1 -  x  ^2 \right) dx.
$$
Then for $|a| > 1$, we obtain
$$
\frac{3}{4} \int_{-1}^{1}   \ln(|x-a|) \left( 1 -  x  ^2 \right) dx
=\frac{3}{8} \int_{-1}^{1}  \left( \ln(|x-a|)  +   \ln(|x+a|) \right) \left( 1 -  x^2 \right) dx =
$$
%$$
%=\frac{3}{8} \int_{-1}^{1}   \ln(a^2-x^2)  \left( 1 -  x^2 \right) dx = 
%$$
%$$
%=\frac{3}{8} \int_{-1}^{1}   \ln \left( a^2 \left( 1 - \frac{x^2}{a^2} \right) \right) \left( 1 -  x  ^2 \right) dx =
%$$
$$
= \frac{3}{8} \int_{-1}^{1}    \ln \left( a^2  \right) \left( 1 -  x  ^2 \right) dx +   \frac{3}{8} \int_{-1}^{1} \ln \left( 1 - \frac{x^2}{a^2} \right) \left( 1 -  x  ^2 \right) dx  =
$$
$$
= \frac{1}{2}\ln(a^2) +   \frac{3}{8} \int_{-1}^{1} \ln \left( 1 - \frac{x^2}{a^2} \right) \left( 1 -  x  ^2 \right) dx.
$$
Since
$$
\frac{3}{8} \int_{-1}^{1} \ln \left( 1 - \frac{x^2}{a^2} \right) \left( 1 -  x  ^2 \right) dx
\in \left[\min_{x \in [-1,1]}  \left( \frac{1}{2} \ln \left( 1 - \frac{x^2}{a^2} \right) \right) , \max_{x \in [-1,1]} \left( \frac{1}{2} \ln \left( 1 - \frac{x^2}{a^2} \right) \right) \right] =
$$
$$
=\left[ \left( \frac{1}{2} \ln \left( 1 - \frac{1}{a^2} \right) \right) , 0 \right],
$$
we get
$$
0 \geq \lim_{a \to \infty} (L(a) - \ln(a) ) \geq \lim_{a \to \infty} \left( \frac{1}{2} \ln \left( 1 - \frac{1}{a^2} \right) \right) = 0.
$$
\end{proof}

%Working with the function $L$ (Fig. \ref{L}) is possible, but 
From the numerical point of view, the use of the function $L$ (Fig. \ref{L}) is complicated and not effective. The main problem is connected with a possible numerical instability for $a$ close to $1$. 
Moreover, in our algorithm we use the first derivative (more information in the next chapter and Appendix A) by considering the function
$$
\R_{+} \to L'(\sqrt{a}) =
 \left\{ \begin{array}{ll}
\frac{3}{8} \ln \left( \left| \frac{\sqrt{a}-1}{\sqrt{a}+1} \right| \right) a+\frac{1}{\sqrt{a}}\ln \left( \left| \frac{\sqrt{a}+1}{\sqrt{a}-1} \right| \right) + 2\sqrt{a} & \textrm{for $a \neq 1$,}\\
\frac{3}{4} & \textrm{for $a =1$.}
\end{array} \right.
$$
In this case, we have numerical instability for $a$ close to $0,1$.
Thus, instead of $L$, we use Cauchy M-estimator \cite{ZZ} $\bar L \colon \R \to \R$ which is given by 
$$
\bar L (a):= \frac{1}{2} \ln(e^{-\frac{8}{3}}+ a^2).
$$
The errors caused by the approximation are reasonably small in relation to those connected with kernell estimation\footnote{As it was said in this method one assume that dataset is realization of Gaussian random variable while usually, in practise,  does not have to.}.

\begin{observation}\label{fun:nowa}
The function $ \bar L $ is analytic, even, $\bar L(0)=L(0)$ and $ \lim \limits _{|a| \to \infty} ( \bar L (a) - L (a) ) = 0$.
\end{observation}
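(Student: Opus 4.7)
The plan is to verify the four listed properties of $\bar L(a)=\tfrac{1}{2}\ln(e^{-8/3}+a^2)$ one by one; none of the four should be hard, and the only mild subtlety is extracting the asymptotic statement from Proposition \ref{fun:org} rather than computing it directly from the closed form of $L$.

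First I would dispose of analyticity and evenness. The polynomial $a\mapsto e^{-8/3}+a^2$ is analytic on $\R$ and is bounded below by $e^{-8/3}>0$, so it takes values in an open set on which the principal branch of $\ln$ is analytic; composition then gives that $\bar L$ is analytic on all of $\R$. Evenness is immediate since $(-a)^2=a^2$. Next I would check the value at $0$: $\bar L(0)=\tfrac{1}{2}\ln(e^{-8/3})=-\tfrac{4}{3}$, which matches $L(0)=-\tfrac{4}{3}$ recorded in Proposition \ref{fun:org}.

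The last property is the only one that needs the earlier result. Rather than manipulate the explicit (and numerically delicate) formula for $L$, I would invoke Proposition \ref{fun:org} which states $\lim_{|a|\to\infty}(L(a)-\ln|a|)=0$; it then suffices to prove $\lim_{|a|\to\infty}(\bar L(a)-\ln|a|)=0$ and combine the two limits. The computation is straightforward:
\begin{equation*}
\bar L(a)-\ln|a|=\tfrac{1}{2}\ln\!\bigl(e^{-8/3}+a^2\bigr)-\tfrac{1}{2}\ln(a^2)=\tfrac{1}{2}\ln\!\left(1+\frac{e^{-8/3}}{a^2}\right),
\end{equation*}
which tends to $0$ as $|a|\to\infty$ by continuity of $\ln$ at $1$. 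Subtracting the two relations yields $\bar L(a)-L(a)\to 0$.

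The main ``obstacle'' is purely cosmetic: one must be careful to factor $\ln|a|$ (equivalently $\tfrac{1}{2}\ln a^2$) out of $\bar L$ correctly so that the comparison with $L$ is routed through Proposition \ref{fun:org} instead of through the piecewise closed form, which would reintroduce exactly the numerical instability near $|a|=1$ that motivated passing to $\bar L$ in the first place.
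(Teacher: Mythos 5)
Your proposal is correct, and it takes the natural route the paper intends: the paper states this as an unproved Observation, and your verification (analyticity by composition since $e^{-8/3}+a^2>0$, evenness, $\bar L(0)=-\tfrac{4}{3}=L(0)$, and routing the asymptotics through the property $\lim_{|a|\to\infty}(L(a)-\ln|a|)=0$ from Proposition \ref{fun:org}) is exactly the argument the authors leave to the reader.
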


\begin{figure}[htp]
  \begin{center}
\includegraphics[height=6cm]{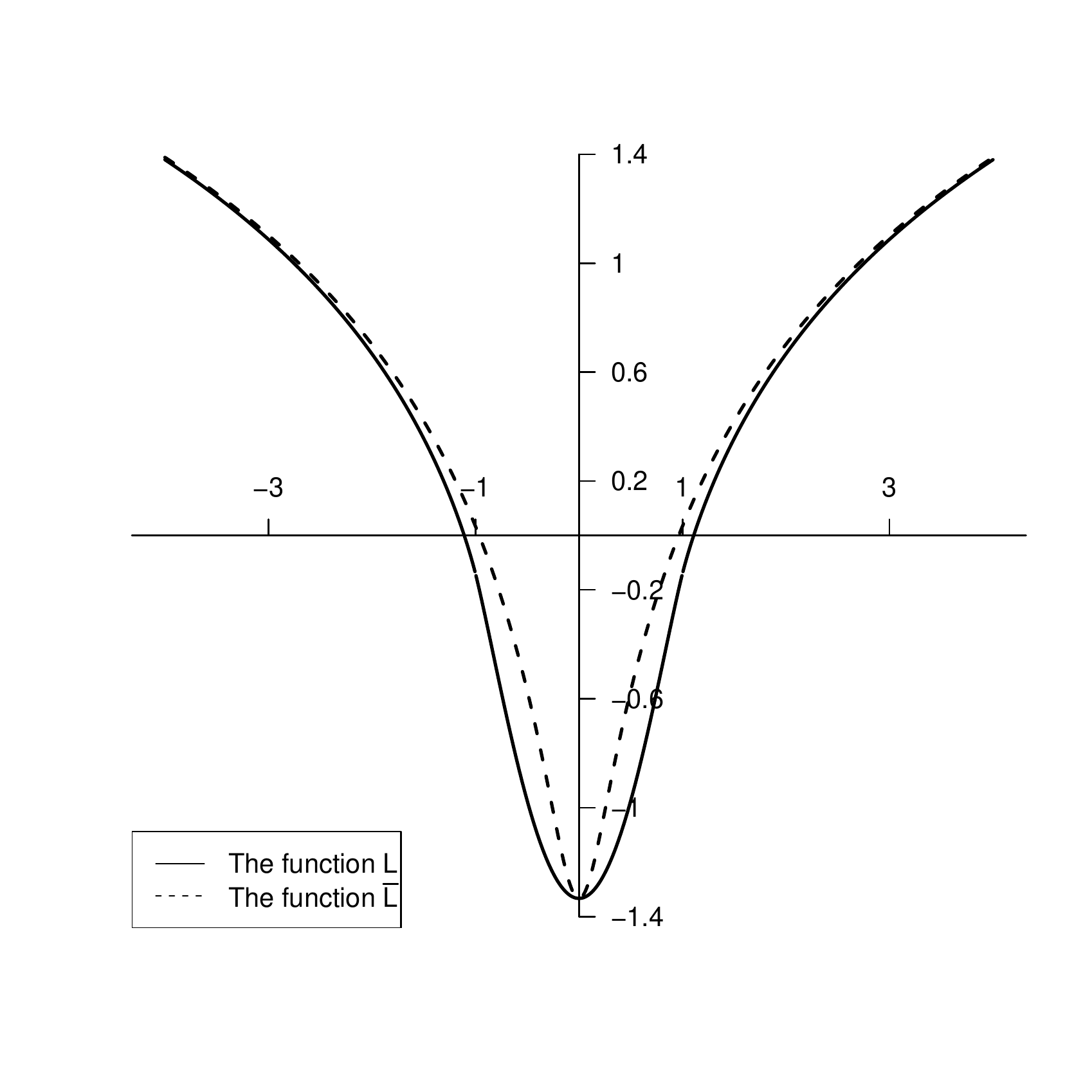}
%\subfigure[The function $L - \bar L$.]{\label{fig:edge-c}\includegraphics[height=6cm]{}}\\
  \end{center}
  \caption{Comparison of functions $L$ and $\bar L$.}
  \label{Ex_5}
\end{figure}

Consequently, the problem of finding the optimal (respectively to the needed memory) center of the coordinate system 
can be well approximated by searching for the global minimum of the function
$$
\bar M_S(a) := \frac{1}{N} \sum_{i=1}^{N}  
\bar L \left( \frac{a - x_{i} }{ h} \right) \for a \in \R. 
$$

\section{Search for the minimum}

In this section, we present a method of finding the minimum of the function $\bar M_S(a)$. 
We use the robust technique which is called M--estimator \cite{HR}. 

\begin{remark}
For the convince of the reader we shortly present the standard use of 
the M--estimator's method. %is commonly use to find 
Let $\{ x_1, \ldots, x_n \}$ be a given dataset. We  are looking for the best representation of the points
$$ 
\min \limits_a \sum_i ( x_i - a )^2.
$$
Normally we choose barycentre of the data but elements which are fare from the center usually courses undesirable effect.
The M--estimators try to reduce the effect of outliers by replacing the squares by another function (in our case Cauchy M-estimator \cite{ZZ})
$$
\min_a \sum_i L(x_i - a ),
$$
where $L$ is a symmetric, positive--definite function with a unique minimum at zero, and is
chosen to be less increasing than square. Instead of solving directly this problem, one usual
implements an Iterated Reweighted Least Squares method (IRLS) [see Appendix A]. 
In our case we are looking for
$$
\min_a \sum_{i} L(x_i - a ),
$$
where $L$ is interpreted as a function which describes the memory needed to code $x_i$ with respect to $a$.

\end{remark}

Our approach based on Iterated Reweighted Least Squares method (IRLS), is  similar to that presented in \cite{W} and \cite{I}. For convenience of the reader, we include the basic theory connected with this method in Appendix A. 

In our investigations the crucial role is played by the following proposition.

\begin{IRLSC}[see Appendix A]\label{prop:6.1}
Let $f:\R_+ \to \R_+$, $f(0)=0$ be a given concave and differentiable function
and let $S = \{ x_i, \ldots, x_N\}$ be a given data-set.
We consider the function
$$
F(a):=\sum_i f(|a-x_i|^2) \quad \for a \in \R.
$$
Let $\bar a \in \R$ be arbitrarily fixed and let
$$
w_{i} = f'(|x_i-\bar a|^2) \quad \for i = 1 \ldots N \quad \mbox{ and } \quad
\bar  a_w = \frac{1}{\sum _{i=0}^{N} w_{i} } \sum \limits_{i=0}^{N}  w_{i}  x_{i}. 
$$
Then
$$
F( \bar a_w  ) \leq F \left( \bar a \right).
$$
\end{IRLSC}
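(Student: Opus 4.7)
The plan is to exploit concavity of $f$ via its tangent-line majorization and then recognize the resulting upper bound as a weighted least-squares objective whose minimizer is precisely $\bar a_w$. First I would record the key analytic fact: since $f:\R_+ \to \R_+$ is concave and differentiable, for every $x,y \in \R_+$ we have $f(y) \leq f(x) + f'(x)(y-x)$. Substituting $y = |a-x_i|^2$ and $x = |\bar a - x_i|^2$ (with the weight $w_i = f'(|x_i-\bar a|^2)$ appearing as the slope), I get the pointwise inequality
$$
f(|a-x_i|^2) \;\leq\; f(|\bar a - x_i|^2) + w_i \bigl(|a-x_i|^2 - |\bar a - x_i|^2\bigr).
$$

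Summing over $i = 1,\dots,N$, this majorization yields
$$
F(a) - F(\bar a) \;\leq\; \sum_{i} w_i |a-x_i|^2 \;-\; \sum_{i} w_i |\bar a - x_i|^2 \quad \mbox{for every } a \in \R.
$$
The right-hand side, viewed as a function of $a$, is a standard weighted quadratic; a direct computation (setting its derivative to zero) shows it attains its global minimum at $a = \bar a_w = \frac{\sum_i w_i x_i}{\sum_i w_i}$, assuming $\sum_i w_i > 0$. Hence specializing the displayed inequality to $a = \bar a_w$ and comparing with the trivial value $0$ obtained at $a = \bar a$, one gets
$$
F(\bar a_w) - F(\bar a) \;\leq\; \sum_i w_i |\bar a_w - x_i|^2 - \sum_i w_i |\bar a - x_i|^2 \;\leq\; 0,
$$
which is exactly the claim.

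The only delicate point is the degenerate case $\sum_i w_i = 0$; by nonnegativity of $w_i$ this forces all weights to vanish, and the majorization shows $F$ is constant along the update (so $\bar a_w$ can be taken to be $\bar a$ itself and the inequality is trivial). Everything else is a direct consequence of the concavity inequality, so the main conceptual step, rather than a technical obstacle, is simply observing that the concave majorization converts a hard nonconvex problem into a convex quadratic subproblem with a closed-form minimizer — which is the essence of the IRLS method.
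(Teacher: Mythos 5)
Your proof is correct and follows essentially the same majorize--minimize strategy as the paper: the quadratic upper bound you obtain from the tangent-line inequality is exactly the surrogate $H$ of the paper's Theorem IRLS, and both arguments conclude by observing that this weighted quadratic is minimized at $\bar a_w$. The only difference is cosmetic --- you justify $F \leq H$ by the standard first-order characterization of concavity, whereas the paper derives the same inequality by a monotonicity argument on $g-h$ over $[0,\bar a]$ and $[\bar a,\infty)$; your treatment of the degenerate case $\sum_i w_i = 0$ (which the paper ignores) is a welcome extra.
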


Making use of the above corollary, by substitution $\bar a \to \bar a_w$ in each step we come closer to a local minimum of the function $F$. 
%obtain a better approximation of the minimum. 
%For a given $a \in \R $, by taking weighted average $a_w$ (see Corollary IRLS), we find the point which reduces the value of the expected memory function. 
It is easy to see, that $\bar L$ defined in the previous section, satisfies the assumptions of Corollary IRLS. 
Let $S = (x_1 , \ldots, x_N)$ be a given realization of the random variable X and let
$$
h = 2.35  N ^{-\frac{1}{5}}\sqrt{ \sum_{i=1}^{N} \frac{( x_{i} - m(S)  )^2}{N-1}}.
$$
The algorithm (based on IRLS) can be described as follows.
\begin{SC}
\begin{center}
	\begin{minipage}[t]{0.9\textwidth}
		\begin{algorithmic}
			\STATE {\bf initialization}
			\STATE \textit{stop condition}
			\INDSTATE  $\varepsilon > 0$
			\STATE \textit{initial condition}
			\INDSTATE  $j=0$
			\INDSTATE  $a_j=m(S) $
			\REPEAT 
			\STATE \textit{calculate} 
			\INDSTATE  $w_{i}:=\left( e^{-\frac{8}{3}}+ \frac{1}{h^2} (x_i-a_j)^2 \right)^{-1} \for i = 1, \ldots, N$ 
			\INDSTATE  $j=j+1$
			\INDSTATE  $ a_j =  \frac{1}{\sum _{i=0}^{N} w_{i} } \sum \limits_{i=0}^{N}  w_{i}  x_{i} $
			\UNTIL{ $| a_j - a_{j-1} | < \varepsilon$}
		\end{algorithmic}
	\end{minipage}
\end{center}
\end{SC}
\begin{figure}[htp]
  \begin{center}

\includegraphics[height=6cm]{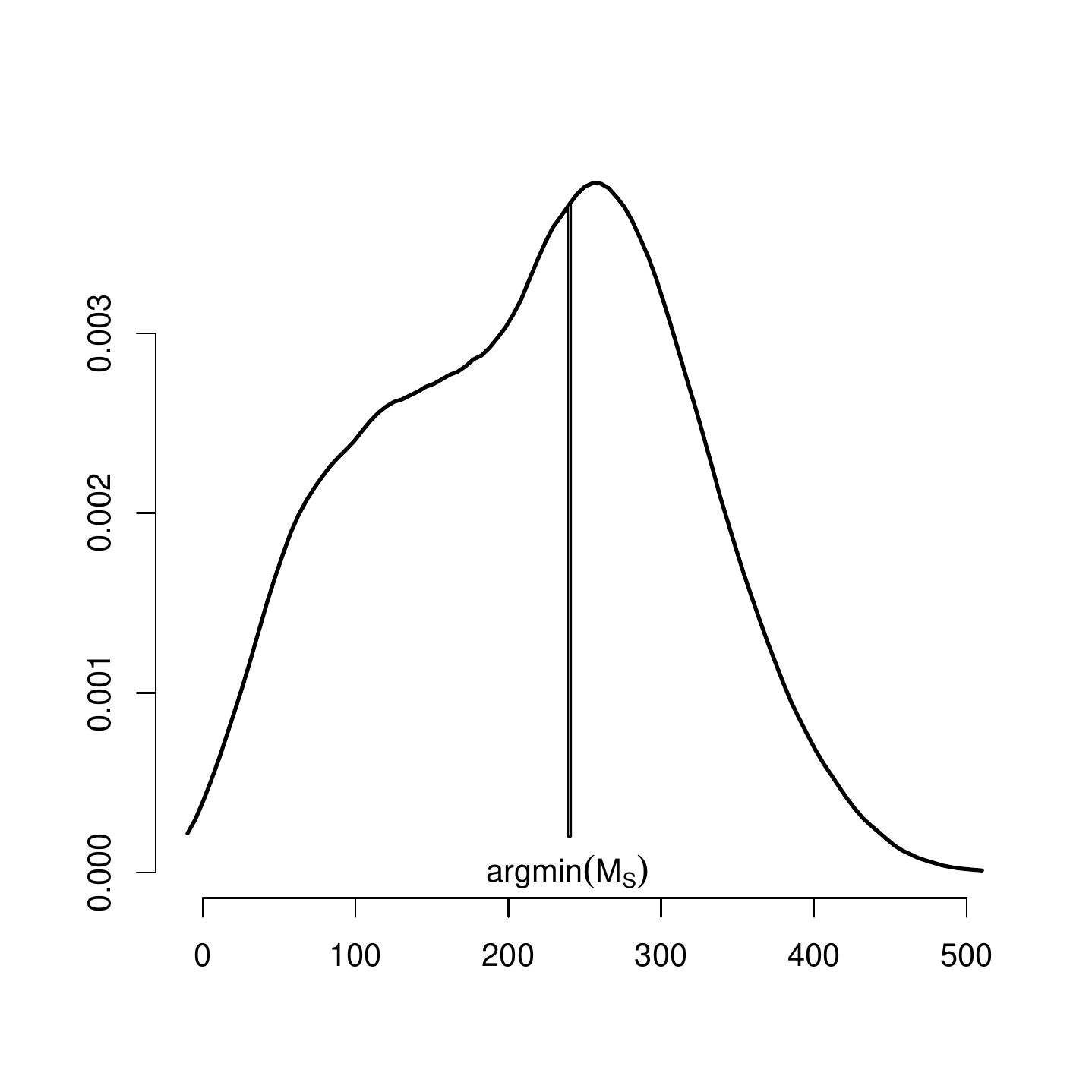}
  \end{center}
  \caption{Estimation of the density distribution for the Forex data.}
  \label{Forex_d}
\end{figure}
The first initial point can be chosen in many different ways. We usually start from the barycentre of the dataset because, for equal weights, the barycentre is the best approximation of the sample (for full code written in R Project, see Appendix B).

Now we show how our method works in practise. 
\begin{example}
Let $S$ be the sample of the index of USD/EUR from Forex stoke \cite{www}.
The density obtain by the kernel method is presented at Fig. \ref{Forex_d}.
As a result of the algorithm we obtained $\mathrm{alg\_centre} (S) = 238.4174$. We compare our result with the global minimum $\mathrm{argmin}  (M_{S}) =239.509 $ and the barycentre of data
$ m(S) = 212.8004 $:
$$
\min M_{S} = 4.0477, 
$$
$$
M_{S} ( \mathrm{alg\_centre} (S)  )= 4.0478,
$$
$$
M_{S} ( m(S) )= 4.0768.
$$

As we see the difference between $\min M_{S}$ and $M_{S} ( \mathrm{alg\_centre} (S)  )$ is small. 
Moreover, the barycentre gives a good approximation of the memory centre but, as we see in next examples, 
the difference can be large for not uni--modal densities.
\end{example}

In the next step we consider a random variables of the form
$$
X := p_1 \cdot X_1 + p_2 \cdot X_2
$$
where $X_1, X_2$ are two normal random variables\footnote{The normal random variables with means $m$ and  the standard deviation $s$ we denote by $N_{( m , s )}$.} or two uniform random variables\footnote{The uniform random variable on the interval $[a,b]$ we denote by  $U_{[ a , b ]}$.}. 

%\begin{example}
%Let $S$ be a realization of random variable $X =  0.6  N_{( -1 , 1 )} +  0.4  N_{( 1 , 1 )} $ of size 500. 
%The value of
%$$
%M_{S}( \mathrm{argmin}( \bar M_{S}) ) - \min M_{S} = 0.0008,
%$$
%$$
%M_{S}( m(S) ) - \min M_{S} = 0.0077,
%$$
%where $\mathrm{argmin}( \bar M_{S})$ is obtain by our algorithm.
%\end{example}

%In Table\ref{tab:1} we present comparison between real minimum of the function $a \to M(a , \bar f)$ and $M(a_i , \bar f)$ for $i=1,2,3$ where
%\begin{itemize}
%\item $a_1$ is the argument of minimum obtain by our method,
%\item $a_2$ is the argument  of global minimum of $\bar  M(a , \bar f)$,
%\item $a_2$ is the barycentre of $S$. 
%\end{itemize}

In Table \ref{tab:1}, we present comparison of the result of our algorithm and global minimum of the function $M_S$ where $S$ is the realization of random variable $X$ of size 500 with different parameters.
As we see, in the second and the third columns the algorithm which uses the function $\bar L$, gives a good approximation of the minimum for the function $L$.
It means that the use of $\bar L$ from the third section is reasonable and causes minimal errors.

In our cases, we obtained a good approximation of the global minimum.  
Moreover, the difference between the minimum of the original function and the result of our algorithm is small (see fifth column). Consequently, we see that the barycentre of data sets is a good candidate for initial point.

Clearly (see sixth column), we see that the barycentre of a data is not a good approximation 
of the memory center, especially in the situation of not uni-modal densities. 

\newenvironment{changemargin}[2]{%
  \begin{list}{}{%
    \setlength{\topsep}{0cm}%
    \setlength{\leftmargin}{-0.8cm}%
    \setlength{\rightmargin}{-1cm}%
  }%
  \item[]}{\end{list}}

\begin{table}[t]

\begin{changemargin}

	\begin{tabular}{| l | l | l | l | l |  }

    \hline
     Model &  &  & &\\ 

 $ p_1 \cdot X_1 + p_2 \cdot X_2	$ &  $a_{ r}$  & $a_m$ & $ M_S(a_r) - M_S(a_m) $ & $ M_S( m(S) ) - M_S(a_m) $ \\ \hline 

$ 0.6  N_{( -1 , 1 )} +  0.4  N_{( 1 , 1 )} $&  -0.347  &  -0.379  &  0.00017  &  0.00658  \\ \hline
$ 0.4  N_{( -6 , 1 )} +  0.6  N_{( 6 , 1 )} $&  \ 5.803  & \ 5.676  &  0.00079  &  0.55374  \\ \hline
$ 0.4  N_{( -1 , 1 )} +  0.6  N_{( 1 , 1 )} $&  \ 0.483  & \ 0.416  &  0.00093  &  0.01353  \\ \hline
$ 0.3  N_{( -6 , 0.5 )} +  0.7  N_{( 6 , 1 )} $& \ 5.979  &\  5.863  &  0.00089  &  0.57296  \\ \hline
$ 0.2  N_{( -2 , 0.5 )} +  0.8  N_{( 3 , 2 )} $& \ 2.721  &\  2.770  &  0.00021  &  0.05129  \\ \hline
$ 0.6  U_{[ -3 , -1 ]} +  0.4  U_{[ 0 , 1 ]} $&  -1.805  &  -1.824  &  0.00014  &  0.19388  \\ \hline
$ 0.4  U_{[ -3 , -1 ]} +  0.6  U_{[ 0 , 1 ]} $& \ 0.364  &\  0.403  &  0.00144  &  0.41139  \\ \hline
$ 0.3  U_{[ -3 , -1 ]} +  0.7  U_{[ 0 , 1 ]} $& \ 0.433  &\  0.447  &  0.00028  &  0.44633  \\ \hline
$ 0.2  U_{[ -2 , -1 ]} +  0.8  U_{[ 1 , 2 ]} $& \ 1.403  &\  1.445  &  0.00265  &  0.38220  \\ \hline
$ 0.2  U_{[ -5 , -2 ]} +  0.8  U_{[ 3 , 4 ]} $& \ 3.464  &\  3.439  &  0.00019  &  0.50817  \\ \hline

	\end{tabular}

\end{changemargin}

\caption{In this table we have following the notation: $a_r = \mathrm{alg\_centre}(S) $ and $a_m = \mathrm{argmin}(M_S)$.}
\label{tab:1}

\end{table}

%%%%%%%%%%%%%%%%%%%%%%%%%%%%%%%%%%%%%%%%%%%%%%%%%
\section{Appendix A}
%%%%%%%%%%%%%%%%%%%%%%%%%%%%%%%%%%%%%%%%%%%%%%%%%

In the fourth section, we presented a method for finding the minimum of the function $\bar M_S(a)$. Our approach based on the Iterated Reweighted Least Squares algorithm (IRLS). The method can be applied in  statistic and computer since. We have used them to minimize the function $\bar M_S(a)$. Similar approach is presented in \cite{W} and \cite{I}. For convenience of the reader, we show the basic theoretical information about IRLS algorithm. The main theorem, related with this method, can be formulated as follows: 

\begin{IRLST}\label{IRLS}
Let $f_i:\R_+ \to \R_+$, $f(0)=0$ be a set of concave and differentiable function
and let $X=(x_i)$ be a given data-set.
Let $\bar a \in \R$ be fixed.
We consider functions
$$
F(a):=\sum_i f_i(|a-x_i|^2) \quad \for a \in \R
$$
and
$$
H(a):=\sum_i [f_i(|\bar a-x_i|^2)-f_i'(|\bar a-x_i|^2)|\bar a-x_i|^2]
+f_i'(|\bar a-x_i|^2)|a-x_i|^2 \quad \for a \in \R.
$$
Then $ H(\bar a)=F(\bar a)$ and
$$
H \geq F.
$$

\end{IRLST}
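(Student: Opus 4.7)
The plan is to handle the two claims separately, with the equality $H(\bar a)=F(\bar a)$ first (it is essentially free) and then the global bound $H \ge F$ via the tangent characterization of concavity.

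For the first claim, I would just substitute $a = \bar a$ into the definition of $H$. In each summand the two terms containing $f_i'(|\bar a - x_i|^2)$ differ only by sign and both multiply $|\bar a - x_i|^2$, so they cancel, leaving $\sum_i f_i(|\bar a - x_i|^2) = F(\bar a)$. No use of concavity is needed here.

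The core of the theorem is the inequality. My plan is to apply the standard fact that a concave differentiable function lies below each of its tangents: for any concave differentiable $g \colon \R_+ \to \R$ and any $s, t \in \R_+$,
$$
g(t) \le g(s) + g'(s)(t-s).
$$
Applying this to $g = f_i$ with $s = |\bar a - x_i|^2$ and $t = |a - x_i|^2$ gives, for every index $i$,
$$
f_i(|a-x_i|^2) \le f_i(|\bar a-x_i|^2) - f_i'(|\bar a-x_i|^2)|\bar a-x_i|^2 + f_i'(|\bar a-x_i|^2)|a-x_i|^2,
$$
which is exactly the $i$-th summand of $H(a)$. Summing over $i$ yields $F(a) \le H(a)$ for all $a \in \R$.

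There is no real obstacle here once one recognizes that $H$ is precisely the sum of the tangent majorants of $f_i$ at the points $|\bar a - x_i|^2$, evaluated at $|a - x_i|^2$; the only hypotheses invoked are concavity and differentiability of each $f_i$, and the assumption $f_i(0)=0$ is not needed for the inequality itself. The geometric content is that replacing each concave function by its tangent at the current iterate produces an upper envelope that agrees with $F$ at $\bar a$, which is exactly the majorization step underlying the IRLS Corollary that follows.
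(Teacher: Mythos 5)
Your proof is correct. The only substantive difference from the paper is in how the pointwise majorization of each summand is established. You invoke the first--order characterization of concavity directly in the variable $t=|a-x_i|^2$: the tangent inequality $f_i(t)\le f_i(s)+f_i'(s)(t-s)$ at $s=|\bar a-x_i|^2$ is literally the $i$-th summand of $H$, so the whole inequality is a one-line consequence. The paper instead works in the variable $a$ itself, after reducing to a single summand centred at $x_i=0$: it sets $g(a)=f(a^2)$, $h(a)=[f(\bar a^2)-f'(\bar a^2)\bar a^2]+f'(\bar a^2)a^2$, and shows $g-h$ is increasing on $[0,\bar a]$ and decreasing on $[\bar a,\infty)$ using only that $f'$ is decreasing, together with a remark that $g$ and $h$ are absolutely continuous so that the sign of the a.e.\ derivative controls monotonicity. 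The two arguments rest on the same fact (monotonicity of $f'$); yours is shorter and cites the standard tangent--majorization lemma, while the paper's is self-contained in that it re-derives the needed instance of that lemma for the composition with the square, at the cost of the case split and the absolute-continuity remark. Your observation that $f_i(0)=0$ plays no role in either claim is also accurate.
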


\begin{proof}
Let $i$ be fixed. For simplicity, we denote $f = f_i$.

Let us first observe that, without loss of generality, we may assume that
$\bar x=0$ (we can make an obvious substitution $a \to a+\bar x$).

Then since all the considered functions are radial, it is sufficient
to consider the one dimensional case when $N=1$. Thus, from now on we assume that $N=1$ and $\bar x=0$. Since the functions are even, it is sufficient to consider the situation on $\R_+$.

Concluding: we are given $\bar a \in \R_+$ and consider functions
$$
g:\R_+ \ni a \to f(a^2)
$$
and
$$
h:\R_+ \ni a \to [f(\bar a^2)-f'(\bar a^2)\bar a^2]
+f'(\bar a^2)a^2.
$$
Clearly, $h(\bar a)=g(\bar a)$. We have to show that $h \geq g $.
We consider two cases.

Let us first discuss the situation on the interval $[0,\bar a]$.
We show that $g-h$ is increasing on this interval, since coincide at $\bar a$
this makes the proof completed. Clearly $g$ and $h$ are absolutely continuous functions
(since $f$ is concave). Thus, to prove that $g-h$ is increasing on $[0,\bar a]$,
it is sufficient to show that $g' \geq h'$ a.e. on $[0,\bar a]$. But $f$ is
concave, and therefore $f'$ is decreasing, which implies that
$$
g'(a)=f'(a^2)2a \geq f'(\bar a^2)2a=h'(a).
$$

So let us consider the situation on the interval $[\bar a,\infty)$.
We will show that $g-h$ is decreasing on $[\bar a,\infty)$. Since $(g-h)(\bar a)=0$,
this is enough. Note that
$$
(g-h)'(a)=f'(a^2)2a-f'(\bar a^2)2a=2a(f'(a^2)-f'(\bar a^2)) \leq 0.
$$

Now, the assertion of the theorem is a simple consequence of the previous property.
\end{proof}

%As a direct corollary we obtain

%\begin{corollary}\label{col:IRLS}
%Let $f:\R_+ \to \R_+$, $f(0)=0$ be a given concave function
%and let $X=(x_i)$ be a given data-set.
%Let $\bar a \in \R$ be fixed.
%We consider functions
%$$
%F(a):=\sum_i f(|a-x_i|^2) \for a \in \R
%$$
%and
%$$
%H(a):=\sum_i [f(|\bar a-x_i|^2)-f'(|\bar a-x_i|^2)|\bar a-x_i|^2]
%+f'(|\bar a-x_i|^2)|a-x_i|^2 \for a \in \R.
%$$
%Then
%$$
%H(\bar a)=F(\bar a) \mbox{ and }H \geq F.
%$$
%\end{corollary}
%\begin{proof}
%It is a direct consequence of the previous theorem.
%\end{proof}

Now we can form the most important results:

\begin{IRLSC}\label{prop:7.1}
Let $f:\R_+ \to \R_+$, $f(0)=0$ be a given concave and differentiable function
and let $S = \{ x_i \}_{i=1}^{N}$ be a given data-set.
We consider the function
$$
F(a)=\sum_i f(|a-x_i|^2) \quad \for a \in \R.
$$
Let $\bar a \in \R$ be arbitrarily fixed and let
$$
w_{i} = f'(|\bar a - x_i|^2) \quad \for i = 1 \ldots N \quad \mbox{ and } \quad
a_w = \frac{1}{\sum _{i=0}^{N} w_{i} } \sum \limits_{i=0}^{N}  w_{i}  x_{i}. 
$$
Then
$$
F( a_w  ) \leq F \left( \bar a \right).
$$
\end{IRLSC}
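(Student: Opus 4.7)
The plan is to derive this corollary as a direct consequence of Theorem IRLS stated just above. The key observation is that the majorizer $H$ constructed in Theorem IRLS is, up to an additive constant independent of $a$, a weighted sum of squares $\sum_i w_i |a - x_i|^2$ with precisely the weights $w_i = f'(|\bar a - x_i|^2)$ defined in the corollary.

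First, I would apply Theorem IRLS to the special case where $f_i = f$ for every $i$, so that
\[
H(a) = \sum_i \bigl[ f(|\bar a - x_i|^2) - f'(|\bar a - x_i|^2)|\bar a - x_i|^2 \bigr] + \sum_i f'(|\bar a - x_i|^2) |a - x_i|^2.
\]
The first sum is a constant $C$ not depending on $a$, while the second is $\sum_i w_i |a - x_i|^2$. Theorem IRLS gives both $H \geq F$ on $\R$ and $H(\bar a) = F(\bar a)$.

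Next, I would minimize the quadratic function $a \mapsto \sum_i w_i (a - x_i)^2$ by elementary calculus: differentiating and equating to zero yields $\sum_i w_i (a - x_i) = 0$, whose unique solution is
\[
a = \frac{1}{\sum_i w_i} \sum_i w_i x_i = a_w.
\]
(Concavity of $f$ with $f(0)=0$ guarantees $f' \geq 0$, so the weights are nonnegative and, assuming they are not all zero, the quadratic is strictly convex with its unique minimum at $a_w$.) Consequently $H(a_w) \leq H(\bar a)$.

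Chaining the three facts gives the conclusion:
\[
F(a_w) \;\leq\; H(a_w) \;\leq\; H(\bar a) \;=\; F(\bar a).
\]
There is no real obstacle here; the whole content of the corollary lies in recognizing that Theorem IRLS's majorizer is a weighted least-squares problem whose minimizer has the closed form $a_w$. The only minor issue worth mentioning is the degenerate case $\sum_i w_i = 0$, in which every $w_i$ vanishes and $H$ is constant, so the inequality $F(a_w) \leq F(\bar a)$ is trivial (any choice of $a_w$ works).
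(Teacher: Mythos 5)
Your proof is correct and follows essentially the same route as the paper: invoke Theorem IRLS to get the quadratic majorizer $H$ with $H \geq F$ and $H(\bar a) = F(\bar a)$, identify its minimizer as the weighted mean $a_w$, and chain $F(a_w) \leq H(a_w) \leq H(\bar a) = F(\bar a)$. Your added remarks on the nonnegativity of the weights and the degenerate case $\sum_i w_i = 0$ are careful touches the paper omits.
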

\begin{proof}
Let $H$ be defined like in Theorem IRLS
$$
H(a)=\sum_i [f(|\bar a-x_i|^2)-f'(|\bar a-x_i|^2)|\bar a-x_i|^2]
+f'(|\bar a-x_i|^2)|a-x_i|^2 \for a \in \R.
$$
Moreover by Theorem IRLS we have 
$$
F(\bar a) = H(\bar a).
$$
Function $H$ is quadratic so the minimum is
$$
a_w = \frac{1}{\sum _{i=0}^{N} w_{i} } \sum \limits_{i=0}^{N}  w_{i}  x_{i} 
$$
and consequently
$$
F(\bar a) = H(\bar a) \geq H(a_w).
$$
\end{proof}
Making use of the above theorem, we obtain a simple method of finding a better approximation of the minimum. For given $a \in \R $, by taking weighted average $a_w$ (see Corollary IRLS) we find the point which
reduces the value of the function.
So, to find minimum, we iteratively calculate weighted barycentre of data set.

%%%%%%%%%%%%%%%%%%%%%%%%%%%%%%%%%%%%%%%%%%%%%%%%%
\section{Appendix B}
%%%%%%%%%%%%%%%%%%%%%%%%%%%%%%%%%%%%%%%%%%%%%%%%%

In this section we present source code of our algorithm written in R Project.

\lstinputlisting[language=R]{./Source_code_in_R.r }

In our simulation we use $\varepsilon = 0.001$ and  $N = 50$.

%% The Appendices part is started with the command \appendix;
%% appendix sections are then done as normal sections
%% \appendix

%% \section{}
%% \label{}

%% References
%%
%% Following citation commands can be used in the body text:
%% Usage of \cite is as follows:
%%   \cite{key}          ==>>  [#]
%%   \cite[chap. 2]{key} ==>>  [#, chap. 2]
%%   \citet{key}         ==>>  Author [#]

%% References with bibTeX database:

\bibliographystyle{model1-num-names}
\bibliography{<your-bib-database>}

%% Authors are advised to submit their bibtex database files. They are
%% requested to list a bibtex style file in the manuscript if they do
%% not want to use model1-num-names.bst.

%% References without bibTeX database:

% \begin{thebibliography}{00}

%% \bibitem must have the following form:
%%   \bibitem{key}...
%%

% \bibitem{}

% \end{thebibliography}

\end{document}